\newtheorem{lemma}{Lemma}[section]
\newenvironment{remark}[1][Remark]{\begin{trivlist}
\item[\hskip \labelsep {\bfseries #1}]}{\end{trivlist}}
\newcommand{\Imm}{\textnormal{Im}}
\newcommand{\GL}{\textnormal{GL}}
\newcommand{\vF}{\mathbb{F}}
\newcommand{\vZ}{\mathbb{Z}}
\def\vF{\mathbb{F}}
\begin{document}



\sloppy

\title{Cryptanalysis of a non-commutative key exchange protocol}
\author{
  Giacomo Micheli\\
    Institute für Mathematik\\
    Universität Zürich\\
    Switzerland\\
    Email: giacomo.micheli@math.uzh.ch}

\maketitle

\begin{abstract}
In the papers by Alvarez et al. and Pathak and Sanghi a non-commutative based public key exchange is described.
A similiar version of it has also been patented (US7184551).
In this paper we present a polynomial time attack that breaks the variants of the protocol
presented in the two papers. Moreover we show that breaking the patented cryptosystem US7184551
can be easily reduced to factoring. We also give some examples to show how
efficiently the attack works.
\end{abstract}


\section{Introduction}

We first describe the noncommutative key exchange presented in \cite{RA2007}
and \cite{Pathak}.
Consider the group $G=\GL(n,\vF_q)$ of invertible matrices over the finite field $\vF_q$ and $M_1,M_2\in G$.
Let the triple $(G,M_1,M_2)$ be public, with $M_1,M_2\in\GL(n,\vF_q)$. 
Let $M_1,M_2$ be elements in $G$ such that $M_1 M_2 \neq M_2 M_1$.
Let the public key be $(G,M_1,M_2)$.
\begin{itemize}
\item{Alice chooses $(a_1,a_2)\in\vZ^2$ and sends $C_1=M_1^{a_1} M_2^{a_2}$ to Bob}
\item{Bob chooses $(b_1,b_2)\in \vZ^2$ and sends $C_2=M_1^{b_1}C_1 M_2^{b_2}$ to Alice}
\item{Alice computes $K=M_1^{-a_1}C_2 M_2^{-a_2}$}
\end{itemize}

As a result Alice and Bob can compute the secret key $K=M_1^{b_1}M_2^{b_2}$.
The purpose of this paper is to show that $K$ can be computed in $O(n^3)$ field operations from $C_1$ and $C_2$.

\section{Preliminaries}

\subsection{Cayley-Hamilton Theorem}
Let $f_M$ be the characteristic polynomial of $M$. Then $f_M(M)=0$. As a result every power of $M$ can be written
in terms of a linear combination of $\{1,M,\dots,M^{n-1}\}$ where $n$ is the order of the matrix. A nice proof of this is presented in \cite[p.21]{ATMC}

\subsection{Solving homogeneous ``mixed'' multivariate polynomial equations of degree $2$}
In general solving multivariate polynomial equations is NP-complete \cite{EASOE}.
In situations where the number of variables is much smaller than the number of equations there exists a polynomial time algorithm. To be precise:
 \cite{EASOE} proposes
an ''expected'' polynomial time algorithm  to solve 
overdefined polynomial 
equations when the number of unknowns $k$ and the number of equations $m$ satisfy the inequality $m\geq \varepsilon k^2$
where $\varepsilon \in (0,{1 \over 2}]$. The expected running time is approximately of 
$k^{O({1 \over \sqrt{\varepsilon}})}$.
The aim of the paper is to use a specialized version of this algorithm that can be proved to lead to a polynomial time algorithm that breaks the protocols
described in \cite{RA2007} and \cite{Pathak} in the generic condition. For all other cases we refer to the euristic result in \cite{EASOE} that, for cryptanalytic purposes,
will be enough.

\subsection{Commutative rings of matrices}
Let $T$ be an invertible matrix and define $\vF_q[T]$ to be the $\vF_q$-algebra generated by $T$.
In other words it will be the image of the evaluation map
\[\psi: \vF_q[x]\longrightarrow M_{n \times n}(\vF_q)\]
\[\psi(p(x)):=p(T).\]
By Cayley Hamilton theorem it follows that $\vF_q[T]$ is a finite dimensional algebra over $\vF_q$.
The following short lemma will be useful for our purposes:
\begin{lemma}\label{RR}
Let $p(T)\in \vF_q[T] \cap \GL_n(\vF_q)=:\left(\vF_q[T]\right)^*$ and 
\[C(T):=\{L\in \GL_n(\vF_q)\: | \: LT=TL \}.\] Then $p(T)^{-1}\in C(T)$.
\end{lemma}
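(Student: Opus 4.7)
The plan is to exploit the obvious fact that every element of $\vF_q[T]$ commutes with $T$, and then transfer this property to the inverse by a direct algebraic manipulation.

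First I would note that since $p(T)$ is, by construction, an $\vF_q$-linear combination of powers of $T$, and every power of $T$ commutes with $T$, we have the trivial identity $p(T)\,T = T\,p(T)$. This is the only structural input needed; nothing about $p(T)$ beyond its membership in $\vF_q[T]$ matters here.

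Next, using the hypothesis $p(T) \in \GL_n(\vF_q)$, I would multiply the identity $p(T)\,T = T\,p(T)$ on the left by $p(T)^{-1}$ and on the right by $p(T)^{-1}$. The left-hand side becomes $T\,p(T)^{-1}$, while the right-hand side becomes $p(T)^{-1}\,T$. Therefore $p(T)^{-1}$ commutes with $T$, which is exactly the condition defining $C(T)$. Since $p(T)^{-1}$ is by assumption an element of $\GL_n(\vF_q)$, it lies in $C(T)$ as required.

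There is no real obstacle here; the proof is essentially one line. One could alternatively invoke the Cayley--Hamilton theorem of the preceding subsection to express $p(T)^{-1}$ as a polynomial in $p(T)$ (by isolating the constant term in the characteristic polynomial of $p(T)$, which is nonzero because $p(T)$ is invertible), and thus as an element of $\vF_q[T]$ itself; commutativity with $T$ would then follow automatically. I would keep the direct inversion argument in the main text, since it avoids this slight overhead and makes the lemma self-contained.
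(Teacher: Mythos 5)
Your proof is correct and is essentially the same one-line argument as the paper's: both start from the commutation $p(T)\,T = T\,p(T)$ and transfer it to $p(T)^{-1}$ by an elementary manipulation with inverses (the paper writes $p(T)^{-1}T = (T^{-1}p(T))^{-1} = (p(T)T^{-1})^{-1} = Tp(T)^{-1}$, while you multiply on both sides by $p(T)^{-1}$). Your variant even has the minor advantage of not invoking $T^{-1}$, but the substance is identical.
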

\begin{proof}
\[p(T)^{-1} T= (T^{-1} p(T))^{-1}=(p(T)T^{-1})^{-1}=T p(T)^{-1}\]
\end{proof}

\section{Performing the attack}
The attack described in the following sections makes use of the elementary tools mentioned above and this is intended
to show the structural vulnerabilities of the system.
Suppose Eve is observing the key exchange, she is then able to get the following information:
$\{M_1,M_2,M_1^{a_1} M_2^{a_2}, M_1^{a_1+b_1} M_2^{a_2+b_2}\}$.
Eve first observes $M_1^{a_1}\in\vF_q[M_1]$ and $M_2^{a_2}\in\vF_q[M_2]$ and that $M_1^{a_1} M_2^{a_2}$ is in
the image of the application
\[\varphi: \vF_q[M_1]\times \vF_q[M_2] \longrightarrow M_{n\times n}(\vF_q)\]
\[\varphi(h_1(M_1),h_2(M_2))=h_1(M_1)h_2(M_2).\]
Then, we are able to write $M_1^{a_1} M_2^{a_2}=p(M_1)q(M_2)$, for some $p(x),q(x)\in \vF_q[x]$ with
\[p(M_1)=\sum_{i=0}^{n-1} x_i M_1^i \quad q(M_2)=\sum_{j=0}^{n-1}y_j M_2^j\]
and then
\begin{equation}\label{a1a2}
M_1^{a_1} M_2^{a_2}=\sum_{i=0,\,j=0}^{n-1} x_i y_j M_1^i M_2^j
\end{equation}
where $x_i,y_j$ are indeterminates.
Observe that the system is solvable in polynomial time with $m=n^2$, $k=2n$ and $\varepsilon= {1 \over 4}$ and expected running time $O(n^{2})$.
We pick now \emph{any} solution and write down $p(M_1)$ and $q(M_2)$.

\begin{remark}
The system given by Equation \ref{a1a2} is easy to solve, again
even without the knowledge of the algorithm presented in \cite{EASOE}
since they consist of $n^2$ homogeneous equations of degree $2$ in $2n$ unknowns
where we can perform a Gaussian elimination-like computation on the variables $u_{i,j}:=x_i y_j$.
We will show those equations with an explicit example in the next subsection.
It is also elementary to observe that when the $n^2$ by $n^2$ matrix of the linear system 
\[\sum_{i,j}u_{i,j}M_1^i M_2^j=M_1^{a_1} M_2^{a_2}\]
is invertible, the attack can be proven to be polynomial by the observation
that the $u_{i,j}$ are unique and the system $x_i y_j= u_{i,j}$ admits a solution by construction (that can be found just by substitutions).
In particular this happens when we have the non degenerate case, in the sense that the $k$-vector space generated by the $M^iN^j$ is the whole matrix ring.

\end{remark}

\begin{remark}
$\,$
\begin{itemize}
\item{We have at least one solution by the observation $M_1^{a_1} M_2^{a_2}\in \Imm{\varphi}$.}
\item{We are not claiming $p(M_1)=M_1^{a_1}$ and $q(M_2)=M_2^{a_2}$. It has been observed in \cite{MMR} 
that in order to break the protocol it is enough to find $U\in\vF_q[M_1]$ and $V\in\vF_q[M_2]$ such that $U=M_1^{a_1}$ and $V=M_2^{a_2}$ }
\item{Since $M_1,M_2$ are invertible, also $p(M_1)$ and $q(M_2)$ are.}
\item{$M_1^{a_1} M_2^{a_2}=p(M_1)q(M_2)$ implies \[(p(M_1)^{-1}M_1^{a_1})( M_2^{a_2}q(M_2)^{-1})=1\]}
\end{itemize}
\end{remark}

Eve gets the key thanks to the computation
\[p(M_1)^{-1}M_1^{a_1+b_1} M_2^{a_2+b_2}q(M_2)^{-1}=\]
\[p(M_1)^{-1}M_1^{b_1}M_1^{a_1}M_2^{a_2} M_2^{b_2}q(M_2)^{-1}=\]
\[M_1^{b_1}(p(M_1)^{-1}M_1^{a_1})(M_2^{a_2} q(M_2)^{-1})M_2^{b_2}=\]
\[M_1^{b_1}(p(M_1)^{-1}M_1^{a_1}M_2^{a_2} q(M_2)^{-1})M_2^{b_2}=\]
\[M_1^{b_1}\cdot 1\cdot M_2^{b_2}=M_1^{b_1}M_2^{b_2}=K\]
where the second equality is due to lemma \ref{RR} and the very last one by
the solution of the system \ref{a1a2}.
\subsection{Example}
Let $p=569$ and $G=\GL_2(\vF_{569})$,
\[M_1=\left(
\begin{array}{cc}
 12 & 34 \\
 11 & 99 \\
\end{array}
\right)\]
and
\[M_2=\left(
\begin{array}{cc}
 172 & 94 \\
 91 & 125 \\
\end{array}
\right).\]
Alice chooses $(a_1,a_2)=(449,41)$. Bob chooses $(b_1,b_2)=(509,131)$.
Alice computes 
\[C_1=M_1^{a_1}M_2^{a_2}=\left(
\begin{array}{cc}
 502 & 108 \\
 3 & 322 \\
\end{array}
\right)\]
and sends that to Bob.
Bob computes the secret key
\[K=M_1^{b_1}M_2^{b_2}=
\left(
\begin{array}{cc}
 273 & 85 \\
 436 & 278 \\
\end{array}
\right)
\]
and
\[M_1^{b_1}CM_2^{b_2}=\left(
\begin{array}{cc}
 501 & 343 \\
 200 & 170 \\
\end{array}
\right)\]
and sends this last value to Alice.
Alice now decrypts as $M_1^{-a_1}M_1^{b_1}CM_2^{b_2}M_2^{-a_2}=K$.
Eve writes the system
\[(x_0 Id + x_1M1)(y_0 Id + y_1M2)=C\]
consisting of the equations
\[
\begin{cases}
 {x_0} {y_0}+172 {x_0} {y_1}+12 {x_1}
   {y_0}+37 {x_1} {y_1} = 502
   \\
   94 {x_0} {y_1}+34
   {x_1} {y_0}+257 {x_1} {y_1}=108 \\
 91 {x_0} {y_1}+11 {x_1} {y_0}+90 {x_1}
   {y_1}= 3
   \\
   {x_0} {y_0}+125 {x_0} {y_1}+99 {x_1} {y_0}+322 {x_1} {y_1}=322
\end{cases}
\]
Eve can easily find a solution of this system by a relinearization of the kind
$x_i y_j=:u_{i,j}$; e.g. $(x_0,x_1,y_0,y_1)=(1,166,244,168)$.
She now performs the computation
\[(x_0 Id + x_1M1)^{-1}M_1^{a_1+b_1}M_2^{a_2+b_2}  (y_0 Id + y_1M_2)^{-1}=\]
\[
\left(
\begin{array}{cc}
 273 & 85 \\
 436 & 278 \\
\end{array}
\right)=K\]

\section{Cryptanalysis of the public key patented variant of the protocol}

\subsection{Brief description}

In this section we cryptanalise the patent $US7184551$. 
Observe that the patented protocol is roughly $k$ times computationally more expensive than RSA, where $k$ is the order of the matrices we are using.

\emph{Public key}
\begin{itemize}
\item{Alice chooses $A,C\in GL_k(\vZ_n)$ for $n=pq$ and $p,q$ prime numbers}
\item{$B=CAC$ and $G\in\vZ_n[C]$}
\item{Alice publishes $(A,B,G)$}
\end{itemize}
\emph{Encryption}
\begin{itemize}
\item{Bob choses $D\in \vZ_n[G]$}
\item{Bob computes $K=DBD$ and $E=DAD$}
\item{Let $M$ be the message in $GL_k(\vZ_n)$}
\item{Bob sends $(KM,E)$}
\end{itemize}
\emph{Decryption}
\begin{itemize}
\item{Alice computes $CEC=K$}
\item{Alice decrypts as $K^{-1} K M$}
\end{itemize}

\subsection{Cryptanalysis}
The idea behind this cryptanalysis is the same as in the previous sections,
we just need to make a revision of what we did before.
In this section we prove that the problem of breaking the protocol above can be reduced to factoring a modulus.
If $M\in M_{k\times k}(\vZ_n)$, let $M_p$ and $M_q$ denote
its two reductions modulo $p$ and $q$ respectively.
We reduce $G$, $E$ and $A$ modulo $p$ and write the system
\begin{equation}\label{system2}
E_p=\left(\sum_{j=0}^{k-1}x_j G_p^j\right)A_p\left(\sum_{i=0}^{k-1}x_i G_p^j\right)
\end{equation}
in $k$ unknowns and $k^2$ homogeneous degree $2$ equations over $\vF_p^k$.
We can assure at least one solution by the construction of $E_p=D_pA_pD_p$, since $D_p$
can be written in terms of low powers of $G$.
We apply again the algorithm presented in \cite{EASOE} getting one solution for the system
in polynomial time with $\varepsilon=1/2$.
This solution identifies a matrix $D'\in M_{k\times k}(\vF_p)$ such that
\[D' A_p D'=D_p A_p D_p=E_p \quad \mod p.\]
Observe that we get the partial secret key $K_p=K \mod p$ by multiplying 
$B_p$ on both sides by $D'$
\[D'B_pD'=D'C_pA_pC_pD'=\]
\[C_pD'A_pD'C_p=C_pD_pA_pD_pC_p=K_p \mod p.\]
We perform the same procedure modulo $q$
getting $D''M_{k\times k}(\vF_q)$ such that $D''B_qD''=K_q$.
Since we have the computable isomorphism of rings
\[\psi: M_{k\times k}(\vZ_n)\longrightarrow M_{k\times k}(\vF_p)\oplus 
M_{k\times k}(\vF_q)\]
given by the Chinese Remainder Theorem
we are able to recover the secret key $K$ just by taking the preimage 
of the pair $(K_p,K_q)$ through $\psi$.
Note that $\psi^{-1}(K_p,K_q)$ is exactly $K$
by observing that $K_p$ and $K_q$ are necessarily the reductions 
of $K$ modulo $p$ and $q$ (by the homomorphism properties of $\psi$)
and then that $K=\psi^{-1}(K_p,K_q)$ since $\psi$ is a bijection.

\begin{remark}
Observe that the equations in (\ref{system2}) are even easier then the ones in (\ref{a1a2})
since they have the same structure but half of the unknowns. 
What is again important to observe is that Cayley Hamilton theorem always assures us
a solution.
In the next subsection we give an example to show how they look like.
\end{remark}

\subsection{Example}

Let $n=6133=541\cdot 113$ and Alice's public key
constructed as follow: let $C$ be
\[\left(
\begin{array}{cc}
 243 & 112 \\
 234 & 233 \\
\end{array}
\right)\]
and
\[A=\left(
\begin{array}{cc}
 121 & 231 \\
 144 & 242 \\
\end{array}
\right)\]
then 
\[
B=CAC=
B=\left(
\begin{array}{cc}
 36124 & 40493 \\
 39554 & 16490 \\
\end{array}
\right)
\]
Then we choose $G=14 \cdot 1+ 3374 \cdot C=$
\[
\left(
\begin{array}{cc}
 25167 & 11090 \\
 55920 & 52560 \\
\end{array}
\right).\]
The public key will be $(A,B,G)$.
Bob secret key is constructed as follow:
Let
\[D=34125\cdot 1 + 7123 G=\left(
\begin{array}{cc}
 56710 & 10234 \\
 36665 & 40513 \\
\end{array}
\right)\] 
\[K=\left(
\begin{array}{cc}
 20609 & 51651 \\
 14785 & 1448 \\
\end{array}
\right)\]
and
\[E=DAD=
\left(
\begin{array}{cc}
 57174 & 14133 \\
 7237 & 20711 \\
\end{array}
\right).
\]
Let $M$ be any message, then Bob sends
\[(KM,E)\]
Eve attacks the system as follows:
she reduces $E$ modulo $541$ getting
\[E_{541}=\left(
\begin{array}{cc}
 369 & 67 \\
 204 & 153 \\
\end{array}
\right).
\]
She has now to solve the system
\[(x 1+ y G_{541})A_{541}(x 1+ y G_{541})=E_{541}\]
getting for example the solution
$(x_0,y_0)=(220,159)$,
so we get 
$K_{541}=(x_0 1+ y_0 G_{541}) B_{541} (x_0 1+ y_0 G_{541})$
and then
\[K_{541}=\left(
\begin{array}{cc}
 51 & 256 \\
 178 & 366 \\
\end{array}
\right)\]
Analogously one gets
$K_{113}=(x_0 1+ y_0 G_{113}) B_{113} (x_0 1+ y_0 G_{113})$
where $(x_0,y_0)=(55,49)$, getting
\[
\left(
\begin{array}{cc}
 43 & 10 \\
 95 & 92 \\
\end{array}
\right)
\]
By the Chinese Remainder Theorem we get
\[K=\left(
\begin{array}{cc}
 20609 & 51651 \\
 14785 & 1448 \\
\end{array}
\right)\]

\nocite{MMR}

\section{Conclusion}
We have presented a polynomial time attack to the noncommutative protocol proposed in  \cite{RA2007} and \cite{Pathak}.
Moreover we have shown the weakness of 
such a protocol over any subgroup of matrices over
any finite field.
We have also made the attack work on the protocol presented in the patent \cite{PatentG}
showing that breaking the cryptosystem can be reduced to factoring. 
It would be very interesting to find analogous noncommutative schemes that are resistant to the attack we presented.
In particular the key point is that the vector space structure of matrix rings over fields is a major weakness of such kind of protocols.
We would like to thank Gerard Maze and Davide Schipani for their very helpful ideas and suggestions.
\bibliography{biblio}
\bibliographystyle{plain}

\end{document}